\newtheorem{thm}{Theorem}
\title{A Regional Bayesian POT Model for Flood Frequency Analysis}
\date{}
\author{Mathieu Ribatet$^{1,2}$ \and Eric Sauquet$^2$ \and
  Jean-Michel Gr\'esillon$^2$ \and Taha B.M.J. Ouarda$^1$}
\begin{document}

\onecolumn{
\maketitle
\begin{center}
  $^1$INRS-ETE, University of Qu\'ebec, 490, de la Couronne Qu\'ebec,
  Qc, G1K 9A9, CANADA \newline
  $^2$Cemagref, 3 bis quai Chauveau CP 220, 69336 Lyon Cedex 09,
  FRANCE
\end{center}

\begin{abstract}
  \noindent
  Flood Frequency Analysis is usually based on the fitting of an
  extreme value distribution to the series of local
  streamflow. However, when the local data series is short, frequency
  analysis results become unreliable. Regional frequency analysis is a
  convenient way to reduce the estimation uncertainty. In this work,
  we propose a regional Bayesian model for short record length sites.
  This model is less restrictive than the Index Flood model while
  preserving the formalism of ``homogeneous regions''. Performance of
  the proposed model is assessed on a set of gauging stations in
  France. The accuracy of quantile estimates as a function of
  homogeneousness level of the pooling group is also analysed. Results
  indicate that the regional Bayesian model outperforms the Index
  Flood model and local estimators.  Furthermore, it seems that
  working with relatively large and homogeneous regions may lead to
  more accurate results than working with smaller and highly
  homogeneous regions.
  \\
  \noindent
  Key~words:~Regional Frequency Analysis -- Bayesian Inference -- Index
  Flood -- \textit{L-moments} -- Markov Chain Monte Carlo
\end{abstract}
}

\section{Introduction}
\label{sec:intro}

Flood frequency analysis is essential in preliminary studies to define
the design flood. Methods for estimating design flow usually consist
of fitting one of the distributions given by the extreme value theory
to a sample of flood events. If modelling exceedance over a threshold
is of interest, a theoretical justification
\citep{Fisher1928,Balkema1974,Pickands1975} exists for the use of the
Generalized Pareto distribution (GP).
\begin{equation}
  F(x) = 1 - \left( 1 +\frac{\xi\left(x - \mu\right)}{\sigma}\right)
  ^{-1/\xi}
  \label{eq:gpd}
\end{equation}
where $1+\xi\left(x-\mu\right)/\sigma>0$, $\sigma>0$. $\mu$, $\sigma$
and $\xi$ are the location, scale and shape parameters. This
distribution is defined for $\xi \neq 0$, and can be derived by
continuity in the case $\xi = 0$, corresponding to the Exponential
case:
\begin{equation}
  F(x) = 1 - \exp\left( - \frac{x - \mu}{\sigma} \right)
  \label{eq:expo}
\end{equation}
A comprehensive review of the Extreme Value Theory is given by
\citet{Embrechts1997} and \citet{Coles2001}.
\par
However, frequency analysis can lead to unreliable flood quantiles
when little data is available at the site of interest. A convenient
way to improve estimates of flood statistics is to incorporate data
from other gauged locations in the estimation procedures. This
approach is widely applied in hydrology and is known as Regional Flood
Frequency Analysis (\textbf{RFFA}). One of the most popular and simple
approaches privileged by engineers is the Index Flood method
\citep{Dalrymple1960}. The standard procedure allows: a) the
delineation of homogeneous regions, i.e.\ a set of sites which behave
- hydrologically and/or statistically - in the same way; b) the
derivation of a regional flood frequency distribution; c) and the
estimation of the parameters and quantiles at the site of interest.
\par
Regions are collections of gauged basins with similar site
characteristics related to the flood magnitude. The pooled stations
are not necessarily in the proximity of the site of interest. Forming
homogeneous regions can be achieved in various ways. Regions have
first been geographically established. More recent work promoted the
use of geographically non-contiguous regions
\citep{Burn1990,GREHYS1996}.  Recent research has defined the concept
of ``region of influence'' \citep{Acreman1989}.  Other techniques can
be used such as Artificial Neural Networks to identify groups of
stations \citep{Hall2002}.
\par
The Index Flood model assumes that flood distributions at all sites
within a region are identical, up to a scale factor. 
The Index Flood approach is not exempt from critics as its
application requires strong assumptions.  One major implicit
assumption, noticed by \citet{Gupta1994}, is that the coefficient of
variation of peak flows is to be constant across the region. This
fundamental property seems not to be verified in practice
\citep{Robinson1997} and not physically justified \citep{Katz2002}.
\par
The assumptions of the Index Flood model need often to be relaxed to
suit the observations. For this purpose, \citet{Gabriele1991} proposed
a hierarchical approach to RFFA\@. The skewness is still
supposed to be constant on the whole region, but the coefficient of
variation and the mean annual flood can vary slowly from one subregion
to another. However, the two authors underlined the practical
difficulty to delineate these subregions.
\par
In the Index Flood model, each observation from any site within the
region have the same weight. However, it seems not optimal as,
obviously, the most precious information come from the target
site. Indeed target data - even short - are the only one which are
``really'' distributed as the target site.
\par
We suggest here to carry out a Bayesian approach that encompasses the
classical Index Flood model and uses the whole data in a more
efficient manner.
In summary, the proposed
Bayesian approach differs from the Index Flood model as it: a) uses
the at-site information in a more efficient way since this approach
distinguishes the target site data and the regional data; b) does not
impose a purely deterministic relationship between sites within the
region.
\par
The main goal of this article is to test the efficiency and robustness
of the developed regional Bayesian model when dealing with short
record length series. For this purpose, classical frequency analysis
\textit{i.e.} local and traditional RFFA will be compared to the
suggested regional Bayesian approach. Section~\ref{sec:indexflow}
presents a brief summary of the classical Index Flood model. Relevant
theoretical aspects of Bayesian theory are introduced and applied to
flood modelling in a RFFA context in Section~\ref{sec:regbayes}.
Section~\ref{sec:data} describes the data set used to illustrate the
method. Section~\ref{sec:elic-prior} describes the procedure used to
elicit the prior distribution.  Section~\ref{sec:ClassApp} outlines
the weaknesses and strengths of each approach on a typical homogeneous
region. Finally section~\ref{sec:CompReg} presents an analysis of the
effect of homogeneity level on quantile estimation.

\section{The Index Flood model}
\label{sec:indexflow}

The Index Flood method states that flood frequency distributions
within a particular region are supposed to be identical when divided
by a scale factor - namely the Index Flood.  Mathematically, this
assumption is expressed as:
\begin{equation}
  \label{eq:sitdist}
  Q^{(S)} = C^{(S)} Q^{(R)}
\end{equation}
where $Q^{(S)}$ is the quantile function at site $S$ , $C^{(S)}$ the
Index Flood at site $S$ and $Q^{(R)}$ the regional quantile function
\textit{i.e.} the dimensionless quantile function valid across the
homogeneous region.
\par
Equation \eqref{eq:sitdist} is the core of the model and leads to
strong constraints concerning at-site distribution parameters.
Consequently, the shape parameter is the same throughout the
homogeneous region, whereas the location and scale parameters have
simple scaling behaviour - see Appendix~A.
\par
Equation \eqref{eq:sitdist} is supposed to be satisfied if all sites
are hydrologically and/or statistically similar. Therefore, one of the
main aspects of this approach is to identify a homogeneous region
which includes the target site.
\par
Similarity in basin characteristics is necessary but not sufficient to
ensure the homogeneity of the region regarding the statistics of the
flood peaks.  \citet{Hosking1993,Hosking1997} suggested a
heterogeneity measure $H_1$ to assess if a region is ``acceptably
homogeneous'' ($H_1 < 1$), ``probably heterogeneous'' ($1 \leq H_1 <
2$) or ``definitively heterogeneous'' ($H_1 \geq 2$). Note the case
$H_1 \leq 0$ seems to detect correlations between sites within the
region.
\par
Once the region satisfies the homogeneity test of
\citet{Hosking1993,Hosking1997}, the regional flood frequency
distribution and the related at-site distribution is computed in a
classical way. That is, by fitting the regional distribution to the
weighted mean of sample \textit{L}-moments. Details for computing
heterogeneity statistics, regional flood frequency and at-site
distribution can be found in \citep{Hosking1993,Hosking1997}.
\par
By definition of the Index Flood model, it can be seen that any
realisations of each samples have the same weight. Giving equal
weights to all site observations is debatable since the most relevant
information is certainly the target site one. Relevance of the target
site information is obvious as this is the only one which is
``really'' distributed as the target site. Thus, in this approach the
available information is not efficiently used.

\section{Regional Bayesian model}
\label{sec:regbayes}

The Bayesian concepts have already been applied with success to the
regional frequency analysis of extreme rainfalls \citep{Coles2003b}
and floods \citep{Madsen1997c, Northrop2004}. 
Regional information is not used to build a regional distribution 
but to specify a kind of
``suspicion'' about the target site distribution.  This is easily
achieved in the Bayesian framework through the so called prior
distribution.
\par

The main goal of Bayesian inference is to compute the posterior
distribution. The posterior distribution $\pi(\theta | x)$ is given by
the Bayes Theorem:
\begin{eqnarray}
  \notag
  \pi\left(\theta | x\right) &=& \frac{\pi\left(\theta\right) \pi\left(x
      ; \theta\right)}{\int_{\Theta}\pi\left(\theta\right) \pi\left(x
      ; \theta\right)d\theta }\\
  \label{eq:bayesthm} &\propto& \pi\left(\theta\right) \pi\left(x ;
    \theta\right)
\end{eqnarray}
where $\theta$ is the vector of parameters of the distribution to be
fitted, $\Theta$ is the space parameter. $\pi\left(x ; \theta\right)$
is the likelihood function, $x$ is the vector of observation and
$\pi\left(\theta\right)$ is the prior distribution.
\par
In theory, the posterior distribution is entirely known but is often
insolvable - because of the integral. One of the solutions is to
fix a prior model which leads to an analytical - or semi-analytical -
posterior distribution and which allows the posterior distribution to
be computed more easily \citep{Parent2003}. Nevertheless, the most
convenient way is to implement Markov Chain Monte Carlo
(\textbf{MCMC}) techniques to sample the posterior distribution. This
approach avoids using a purely artificial prior model with no
theoretical and/or physical justifications.
\par
For our application, the likelihood function corresponds to the GP
distribution as peaks over a threshold are of interest. From
Eq.~\eqref{eq:bayesthm}, if the prior distribution is known, posterior
distribution can be computed - up to a constant. The next section
describes how to define the prior distribution.

\subsection{Prior Distribution}
\label{subsec:priorDist}

The prior model is usually a multivariate distribution which must
represent beliefs about the distribution of the parameters
\textit{i.e.}  $\mu$, $\sigma$ and $\xi$ prior to having any
information about the data.
\par
As the proposed model is fully parametric, the prior distribution
$\pi(\theta)$ is a multivariate distribution entirely defined by its
hyper parameters. In our case study, the marginal prior distributions
were supposed to be independent lognormal for both location and scale
parameters and normal for the shape parameter. Thus,
\begin{equation}
  \label{eq:prior}
  \pi(\theta) \propto J \exp\left[ (\theta' - \gamma)^T \Sigma^{-1} (\theta'
    - \gamma) \right]
\end{equation}
where $\gamma, \Sigma$ are hyper parameters, $\theta' = (\log \mu,
\log \sigma, \xi)$ and $J$ is the Jacobian of the transformation from
$\theta'$ to $\theta$, namely $J=1/\mu\sigma$. $\gamma=(\gamma_1,
\gamma_2, \gamma_3)$ is the mean vector, $\Sigma$ is the covariance
matrix. As marginal priors are supposed to be independent, $\Sigma$ is
a 3-3 diagonal matrix with diagonal elements $d_1, d_2, d_3$.

\subsection{Estimation of the hyper parameters}
\label{subsec:estHypPar}

Hyper parameters are defined through the Index Flood concept. Consider
all sites of a region except the target site - say the $j$-th site. A
set of pseudo target site parameters can be computed:
\begin{eqnarray}
  \label{eq:pseudoloc}
  \tilde{\mu}^{(i)} &=& \mu_*^{(i)} C^{(j)}\\
  \label{eq:pseudoscale}
  \tilde{\sigma}^{(i)} &=& \sigma_*^{(i)} C^{(j)}\\
  \label{eq:pseudoshape}
  \tilde{\xi}^{(i)} &=& \xi_*^{(i)}
\end{eqnarray}
for all $i \neq j$, where $C^{(i)}$ is the at-site Index Flood and
$\mu_*^{(i)}, \sigma_*^{(i)}, \xi_*^{(i)}$ are respectively the
location, scale and shape at-site parameter estimates from rescaled
sample.
\par
Under the hypothesis of the Index Flood model, pseudo parameters
$\left(\tilde{\mu}^{(i)},
  \tilde{\sigma}^{(i)},\tilde{\xi}^{(i)}\right)$ for $i \neq j$ are
expected to be similar to the target site distribution parameters.
Note that, information from the target site sample is not used to
elicit the prior distribution. Thus, $C^{(j)}$ in equations
\eqref{eq:pseudoloc} and \eqref{eq:pseudoscale} must be estimated
without use of the $j$-th site sample.
\par
From these pseudo parameters, hyper parameters can be computed:
\begin{align}
  \label{eq:gamma1}
  \gamma_1 =& \frac{1}{N-1}\sum_{i\neq j} \log \tilde{\mu}^{(i)}, &d_1
  =& \frac{1}{N-1}\sum_{i \neq j} Var\left[ \log \tilde{\mu}^{(i)}
  \right] \\
  \label{eq:gamma2}
  \gamma_2 =& \frac{1}{N-1}\sum_{i\neq j} \log \tilde{\sigma}^{(i)}, &d_2
  =& \frac{1}{N-1}\sum_{i \neq j} Var\left[ \log \tilde{\sigma}^{(i)}
  \right] \\
  \label{eq:gamma3}
  \gamma_3 =& \frac{1}{N-1}\sum_{i\neq j} \tilde{\xi}^{(i)},  &d_3
  =& \frac{1}{N-2}\sum_{i \neq j} \left(\tilde{\xi}^{(i)} - \gamma_3
  \right)^2
\end{align}

It is important at this step to incorporate the uncertainties on the
elicitation of the prior distribution. Indeed, it may avoid problems
related to misleading information resulting from a region not so
homogeneous and moderating a ``suspicion'' that may be too true.
\par
For this purpose, two types of uncertainties are taken into account:
the one from parameter estimation, and the other one from target site
Index Flood estimation. Thus, hyper parameters $\gamma_1$ and
$\gamma_2$ are estimated differently than $\gamma_3$ as pseudo
parameters for location and scale parameters depends on the target
site Index Flood. Under the hypothesis of independence between
$C^{(j)}$ and $\mu_*^{(i)}, \sigma_*^{(i)}$ the variance terms in
Eq.~\eqref{eq:gamma1} and \eqref{eq:gamma2} are computed according
these two types of uncertainties:
\begin{align}
  \label{eq:var1}
  Var\left[ \log \tilde{\mu}^{(i)} \right] = Var\left[ \log C^{(j)}
  \right] + Var\left[\log \mu_*^{(i)} \right]\\
  \label{eq:var2}
  Var\left[ \log \tilde{\sigma}^{(i)} \right] = Var\left[ \log C^{(j)}
  \right] + Var\left[\log \sigma_*^{(i)} \right]
\end{align}
The independence assumption between $C^{(j)}$ and $\mu_*^{(i)}$,
$\sigma_*^{(i)}$ is not too restrictive as the target site Index Flood
$C^{(j)}$ is estimated independently from $\mu_*^{(i)}$,
$\sigma_*^{(i)}$.
\par
Note that $Var\left[ \log \cdot_*^{(i)} \right]$ are estimated thanks
to Fisher Information and the Delta method. Estimation of
$\mathrm{Var}\left[ \log C^{(j)}\right]$ is a special case and depends
on the method for estimating the at-site Index Flood.  Nevertheless,
it is always possible to carry out an estimation of this variance, at
least through standard errors.

\subsection{Specificities of the proposed prior model}
\label{subsec:specPrior}

The construction of the prior distribution with the regional
information was already suggested by \citep{Northrop2004}.
Nevertheless, the location parameter - or equivalently the threshold
in the GP case - was supposed to be known. Yet, from a theoretical
point of view, the location parameter can not be known prior to having
any information from the target site sample. \citet{Northrop2004}
developed a similar approach based on the Index Flood but uncertainty
associated with the scale factor prediction was not considered. The
prior distribution was elicited directly from the distribution of the
``pseudo target site'' estimates $(\tilde{\mu}^{(i)},
\tilde{\sigma}^{(i)}, \tilde{\xi}^{(i)})$.  In this perspective,
``pseudo target site'' estimates are viewed as constant and not as
random variables. When dealing with sites with a long record,
uncertainties on parameter distributions are low. On the contrary,
this have a much more impact for the Index Flood as uncertainties are
as much larger as the target site Index Flood is estimated without use
of at-site data - even with long record length sites. Note that if the
prior distribution is overly accurate, estimation and credibility
intervals are influenced. For these reasons and unlike the approach
proposed by \citet{Northrop2004}, the target site Index Flood in the
proposed methodology is considered to be a random variable and not a
constant.
\par
Thus, our prior distribution is not too falsely ``tight fit''. But it
reflects ``real'' beliefs about target site behaviour without any use
of target site sample.
\par
\citet{Madsen1997c} and \citet{Fill1998} both presented a regional
empirical Bayesian estimator. Both models used conjugates families for
prior distributions. However, even if conjugates families are
convenient devices, they should not only be used just because
computations are easier. In their approaches, prior distributions are
elicited with quantile regression on relevant physiographic
characteristics.
\par
Our approach differs differs from the two previous empirical Bayesian
approaches (i.e target site sample is not used to elicit priors) and
respects in that way absolutely the Bayesian theory. Moreover,
conjugate priors are not considered, but priors are suited to the
data. For example, the lognormal distribution for both location and
shape parameters is justified by a physical and theoretical lower
bound as: a) discharge data are naturally non negative; so the
location parameter should also be non negative; b) the scale parameter
is strictly positive by definition of the GP distribution.
\par
This prior model is quite different from the one proposed by
\citet{Coles1996} who introduced a lognormal prior distribution only
for the scale parameter. Note that it is possible to work with return
levels \citep{Coles1996} or return periods \citep{Crowder1992} instead
of working with distribution parameters. However, regional information
is suited to work directly with distribution parameters. For other
studies, such prior models could be of interest if ``suspicion'' is
based on return levels or return periods.

\section{Data description}
\label{sec:data}

Streamflow data were collected at 48 gauging stations in an area
reaching from the $45^{th}$ to $47^{th}$ N and from the $3^{rd}$ to
the $8^{th}$ E. The selection of the gauging sites was initially based
on the 22 regions into which France is divided for the implementation
of the Water Framework Directive \citep{Wasson2004}. Seven regions
cover the area under study. These regions were manually delineated
taking into account the spatial pattern of mean annual rainfall,
elevation and underlying geology. All these variables might influence
flood generation processes. Therefore this division is considered as a
preliminary guide for pooling stations. According to
\citet{Hosking1997}, pre-regions were slightly altered by identifying
discordant sites while maximising the number of site within the region
and meeting the heterogeneity test. Finally, a set of 14 stations was
selected for this study.  The heterogeneity statistic for this group
is $H_1 = 0.17 < 1$. Consequently, the region be considered as
``acceptably homogeneous''.
\par
The dataset includes seven tributaries to the Loire River and seven
gauging stations located in the French part of the Rh\^one basin
(Fig.~\ref{fig:reghomo}, Tab.~\ref{tab:reghomo}). The record length of
the instantaneous discharge time series ranges from a minimum of 22
years to a maximum of 37 years, with a mean value of 32 years. The
drainage areas vary from 32 to 792 km$^2$.  Moreover, most of the
gauging stations monitored first-order stream catchments \textit{i.e.}
all but two pairs of catchments are unnested. The large majorities of
flood in the region occur during autumn and winter and are caused by
heavy liquid precipitation.
\begin{figure}
  \centering
  \includegraphics[width=0.5\textwidth]{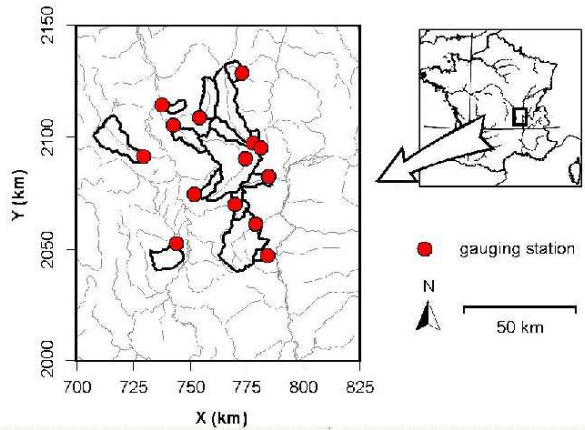}
  \caption{Location of the gauging stations within the studied area}
  \label{fig:reghomo}
\end{figure}
\begin{table*}
  \centering
  \caption{Characteristics of the stations of the homogeneous region}
  \begin{tabular}{llcccc}
    \hline
    Code & Station & Area $\left(km^2\right)$ & X $\left(km\right)$ & Y $\left(km\right)$ & Record\\
    \hline
    K0624510 &  The Bonson river at St Marcellin  & 104 & 744.72 & 2053.90 & 1971-2003\\
    K0663310 &  The Coise river at Larajasse & 61 & 770.67 & 2072.11 & 1971-2003\\
    K0704510 &  The Toranche river at St Cyr & 62.3 & 752.63 & 2076.68 & 1977-2003\\
    K0813020 &  The Aix river at St Germain Laval  & 193 & 729.48 & 2093.71 & 1973-2002\\
    K0943010 &  The Rhins river at Amplepuis & 114 & 754.52 & 2111.10 & 1973-2003\\
    K0974010 &  The Gand river at Neaux & 85 & 743.45 & 2107.75 & 1972-2003\\
    K1004510 &  The Rhodon river at Perreux & 32 & 738.40 & 2116.64 & 1973-2003\\
    U4505010 &  The Ardieres river at Beaujeu & 54.5 & 773.67 & 2130.75 & 1969-2003\\
    U4624010 &  The Azergues river at Chatillon & 336 & 779.07 & 2099.72 & 1970-2003\\
    U4635010 &  The Brevenne river at Sain Bel & 219 & 775.90 & 2092.57 & 1969-2003\\
    U4644010 &  The Azergues river at Lozanne & 792 & 782.56 & 2098.09 & 1981-2003\\
    V3015010 &  The Yzeron river at Craponne & 48 & 785.47 & 2084.50 & 1969-2003\\
    V3114010 &  The Gier river at Rive de Gier & 319 & 780.54 & 2062.67 & 1981-2003\\
    V3315010 &  The Valencize river at Chavanay & 36 & 786.54 & 2048.60 &
    1978-2003\\
    \hline
  \end{tabular}
  \label{tab:reghomo}
\end{table*}
\par

Partial duration flood series were extracted from the time series for
each station. Fig.~\ref{fig:timeseries} illustrates time series for
stations U4505010, U4635010 and V3015010 and their associated
thresholds. Threshold levels were selected to extract in average around
two events per year while meeting the criteria of independence between
floods \citep{Lang1999}.
\par

Three stations U4505010, U4635010 and V3015010 were of particular
interest because of their extended record length of 37 years. The time
series of those three sites are displayed in Fig.~\ref{fig:timeseries}. 
\begin{figure*}
  \centering
  \includegraphics[width=1\textwidth]{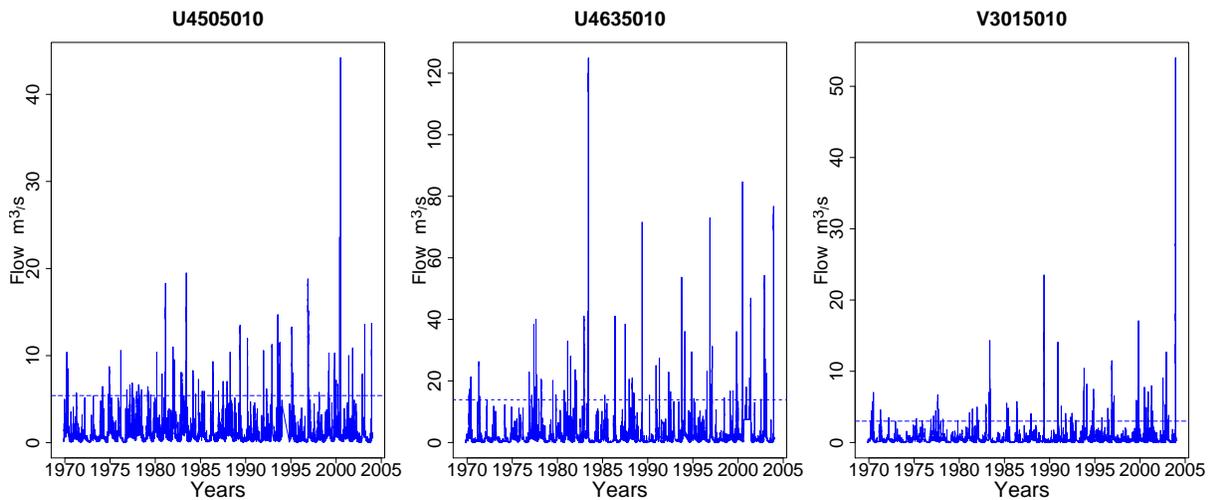}
  \caption{Times series for sites U4505010, U4635010
    and V3015010 and thresholds associated}
  \label{fig:timeseries}
\end{figure*}
\par

In this case study, the scale factor was set to correspond to the
1-year return flood quantile - or equivalently the quantile associated
with probability of non exceedance 0.5. Thus, our choice for the Index
Flood is close to the sample median which was the reference in
\citet{FEH1999} but differs from \citet{Hosking1997} where the sample
mean was used.  This particular choice for the Index Flood is not
unintentional as estimating the quantile with probability of non
exceedance 0.5 is more robust than estimating the sample
mean. Analysing the influence of Index Flood selection is beyond the
scope of this work. The main point is to keep the same Index Flood
throughout the case study to compare approaches on the same basis.

\section{Elicitation of the prior distribution}
\label{sec:elic-prior} 

To estimate the target site Index Flood, the most popular way is 
to develop an empirical formula that relates the 
flow statistic to geomorphological, land-use
and climatic descriptors. This relationship is usually established by
multivariate regression procedures. In our case study, we consider a
simple model for which only one explanatory variable is introduced in
the regression analysis: the drainage area. The power form model is
adopted:
\begin{equation}
  \label{eq:crupedix}
  C = a A^b
\end{equation}
where $A$ is the area of the catchment. Parameters $a$ and $b$ are
computed through ordinary least square procedures on logarithmically
transformed data.
\par
However, more sophisticated models could be carried out. Nevertheless,
for our case study, observations demonstrate that
Eq.~\eqref{eq:crupedix} is a good parametrisation for estimating the
Index Flood. Fig.~\ref{fig:priorijust} and~\ref{fig:regresscrupedix}
illustrate the efficiency of the regressive and prior model for site
U4505010 for which:
\begin{equation}
  \label{eq:estimindflow2}
  \hat{C} = 0.12 A^{1.01}\hspace{1cm} \left(R^2 = 0.94 \right)
\end{equation}
\begin{figure*}
  \centering
  \includegraphics[angle=-90,width=1\textwidth]{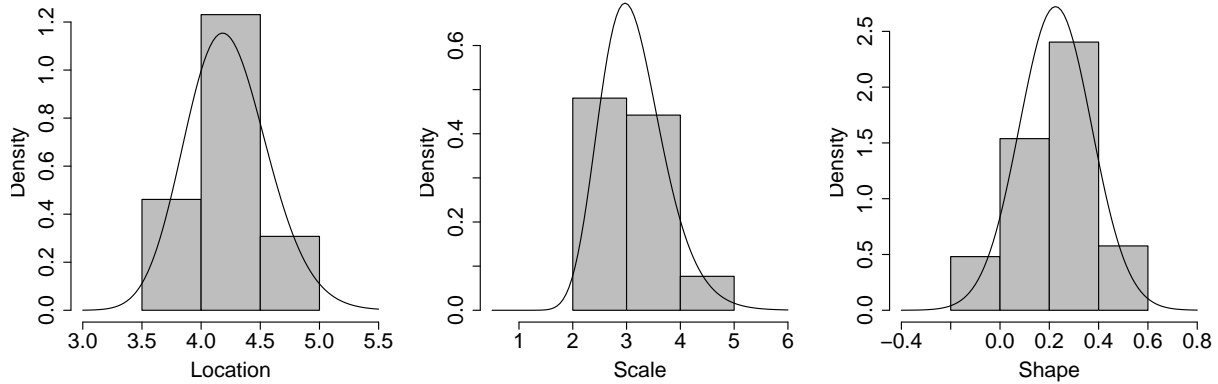}
  \caption{Histograms of pseudo target site estimates of location, scale
    and shape parameters for site U4505010}
  \label{fig:priorijust}
\end{figure*}
\begin{figure}
  \centering
  \includegraphics[angle=-90,width=0.5\textwidth]{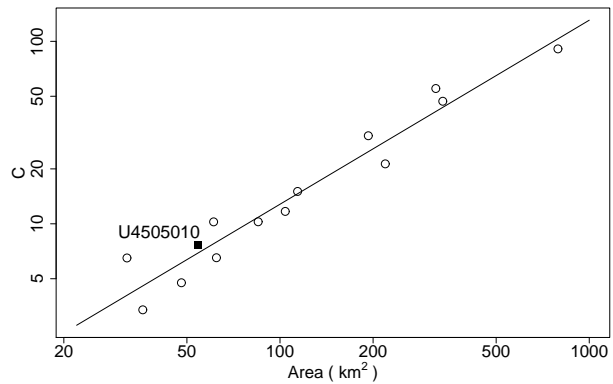}
  \caption{Regression on the basin area for estimating the at-site
    index flow for station U4505010}
  \label{fig:regresscrupedix}
\end{figure}
\par
Regional information was incorporated in the prior
distribution through the Index Flood model. Moreover, uncertainties in
the prior distribution were incorporated. Thus, the prior information
is, on one hand, not too falsely accurate and on the other hand,
informative enough because of the supposed homogeneity of stations.

\section{Performance of the Bayesian model on a homogeneous region}
\label{sec:ClassApp}

When making classical inference on small samples, the uncertainties
may be too large. If an extremal event or too many ``regular events''
in this short record period are present, estimation can be impacted.
It could lead to a dramatic overestimation or underestimation of
quantiles corresponding to different return levels. A perfect model is
expected~: a) to perform well enough even with small samples; b) to be
robust enough when an extreme event occurs in the sample; c) to be
robust enough when too many ``regular'' events occur in the sample.
\par

In this section, three different models will be applied. For this
purpose, the three stations U4505010, U4635010 and V3015010 were
selected to assess the robustness and efficiency of the local,
regional and Bayesian regional models. These three different
approaches correspond to~: a) local: fit the GP distribution to the
peaks over threshold data with the Maximum Likelihood Estimator (MLE),
Unbiased Probability Weighted Moments (PWU) and the Biased Probability
Weighted Moments (PWB); b) regional (REG): fit a regional GP
distribution as described in section~\ref{sec:indexflow} and obtain
the target site distribution; c) regional Bayesian (BAY): elicit the
prior density from regional information, then compute the posterior
density through MCMC techniques. As an illustration of MCMC output,
Fig.~\ref{fig:priorPost} displays the prior and posterior marginal
densities for the GP parameters of the proposed model. Marginal
posterior distribution obtained from an uninformative prior model are
also displayed. That is with the same prior model but with a large
variance - i.e. $d_i = 1000, i=1\ldots3$.
\begin{figure*}
  \centering
  \includegraphics[angle=-90,width=1\textwidth]{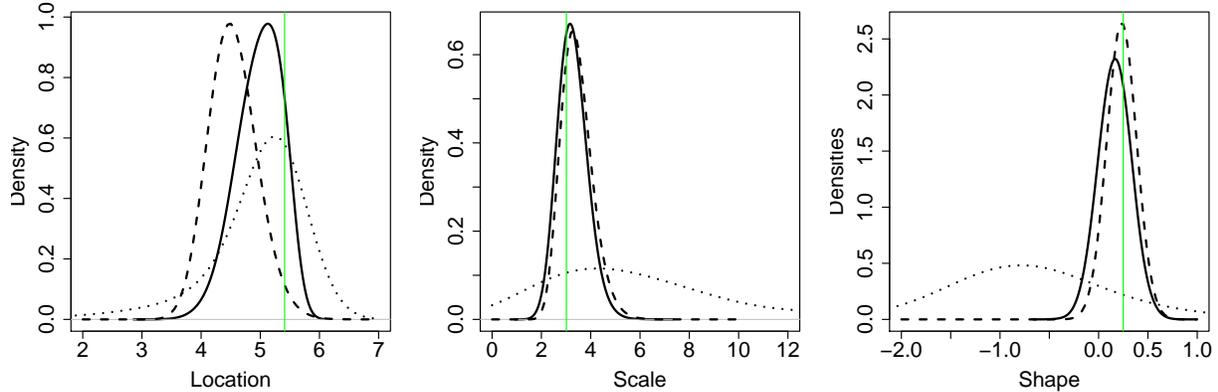}
  \caption{Proposed prior (dashed line), proposed posterior (solid
    line) and posterior from an uninformative prior (dotted line)
    marginal densities for GP parameters. Site U4505010 with 5 years
    record length. Vertical lines denotes benchmark values.}
  \label{fig:priorPost}
\end{figure*}
Fig.~\ref{fig:priorPost} shows the relevance of regional information
as the proposed prior model is clearly more accurate than an analysis
directly from data. Moreover, for the proposed model and even with
only 5 years record length, marginal posterior densities are more
accurate than marginal prior densities - except for the shape
parameter. Thus, combination of regional and target site information
at two different stages is worthwhile, even when only few data are
available. Location parameter is a special case as the modes of both
marginal prior and posterior densities seem to be significantly
dissimilar.
\par
As the main goal of this work is to compare models on small samples,
efficiency will be evaluated on sub-samples from the original data.
Local Maximum Likelihood Estimation on the whole sample will be used
as benchmark to assess the performance of each model. This particular
case will be denoted THEO in the following sections. The choice of MLE
estimate as a benchmark value is reasonable because of its theoretical
motivation and asymptotic efficiency. Moreover, the MLE approach
allows the calculation of profile confidence intervals.  This is a key
point as these profile confidence intervals are often more accurate
than those based on the Delta Method and Fisher Information
\citep{Coles2001}.
\par
Furthermore, as interpretation on quantile estimates is more natural
than for distribution parameter estimates, the analysis will focus on
quantiles corresponding to return period 2, 5, 10 and 20 years.
Benchmark values for these quantiles - and their associated 90\%
profile likelihood confidence intervals are detailed in
Tab.~\ref{tab:BenchQuant}.  Benchmark values with return periods
greater than 20 years will be considered unreliable - as uncertainties
on these quantiles are too large with only 37 years of record.
\begin{table*}
  \centering
  \caption{Benchmark values for 2, 5, 10 and 20 years quantiles and the
    associated 90\% profile likelihood confidence intervals in bracket}
  \begin{tabular}{lcccc}
    \hline
    Station & $Q_2$ & $Q_5$ & $Q_{10}$ & $Q_{20}$\\
    \hline
    U4505010 & 10.8\hfill (10.1, 11.7) & 15.3\hfill (13.9, 17.4) &
    19.5\hfill (17.2, 23.4) & 24.4\hfill (20.6, \ 31.5)\\
    U4635010 & 33.0\hfill (30.0, 36.5) & 52.2\hfill (45.5,
    62.5) & 72.2\hfill (60.2, 95.4)& 98.9\hfill (69.2, 200.5)\\
    V3015010 & \ 7.5\hfill (\ \ 6.9, \ 8.3)& 11.7 \hfill (10.4,
    13.7) & 15.9\hfill (13.6, 19.9)& 21.3\hfill (17.3, \ 28.8)\\
    \hline
  \end{tabular}
  \label{tab:BenchQuant}
\end{table*}
\par
Moreover, for such return periods, benchmark values are quite
equivalent to those obtained with PWM estimates - with a mean bias of
0.89\%. So, performance of each model is not too much impacted by the
choice of the MLE estimator for benchmark values.
\begin{figure*}
  \centering
  \includegraphics[width=1\textwidth]{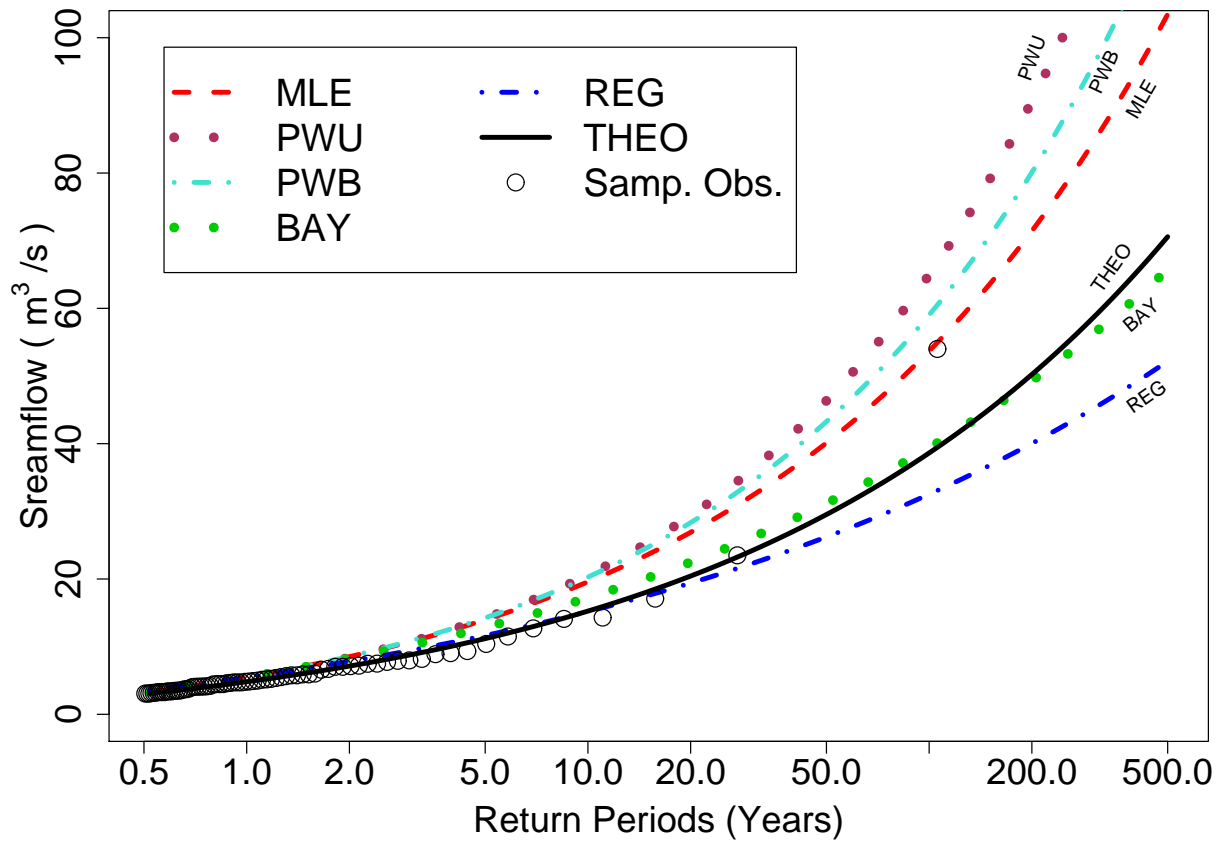}
  \caption{Comparison of frequency curves for site V3015010 with 15
    last years recording}
  \label{fig:retlev}
\end{figure*}
\par
Different frequency curves for site V3015010 with only the last 15
years recording are displayed in Fig.~\ref{fig:retlev}.  Let us focus
on the largest observation.  Return period related to this event is
very high for the REG approach.  All other models lead to
significantly lower return periods. This flood event is extreme at
regional scale but not anymore in a local context.  This
underestimation is due to the misuse of the target site sample to
establish the regional distribution. On the other side, the regional
Bayesian model performs well for all return periods.  Indeed,
Fig.~\ref{fig:retlev} indicates that the return level curve is very
similar to benchmark one.  This is quite logical as it adds up the
advantage of using efficiently the target site sample and a good
``suspicion'' on the global behaviour of the flood peak distribution
thanks to the so-called prior distribution.  Local approaches suggest
a very heavy tail as the extremal event of year 2004 (see
Fig.~\ref{fig:timeseries}, right panel) was in the last sequence of 15
years of records.
\par

As one of the main goals of a RFFA procedure is to deal with small
samples, the target site sample was truncated to obtain shorten
periods of records of $m$ years, $m \in
\left\{5,10,15,20,25,30,37\right\}$.  Robustness and efficiency of the
methods to converge to the parameters of the target site distribution
are measured. For this purpose, quantile estimates corresponding to
return period 2, 5, 10, 20 years - corresponding to non-exceedance
probabilities 0.75, 0.9, 0.95 and 0.975 respectively - are pointed.
The evolution of quantile estimates as a function of the record length
period is presented in Fig.~\ref{fig:quantsize}. The figure is
achieved considering only the first $m$ years ; that is, for example,
estimates related to the 5-year record length corresponds to the
period 1969 -- 1973.
\begin{figure*}
  \centering
  \includegraphics[width=1\textwidth]{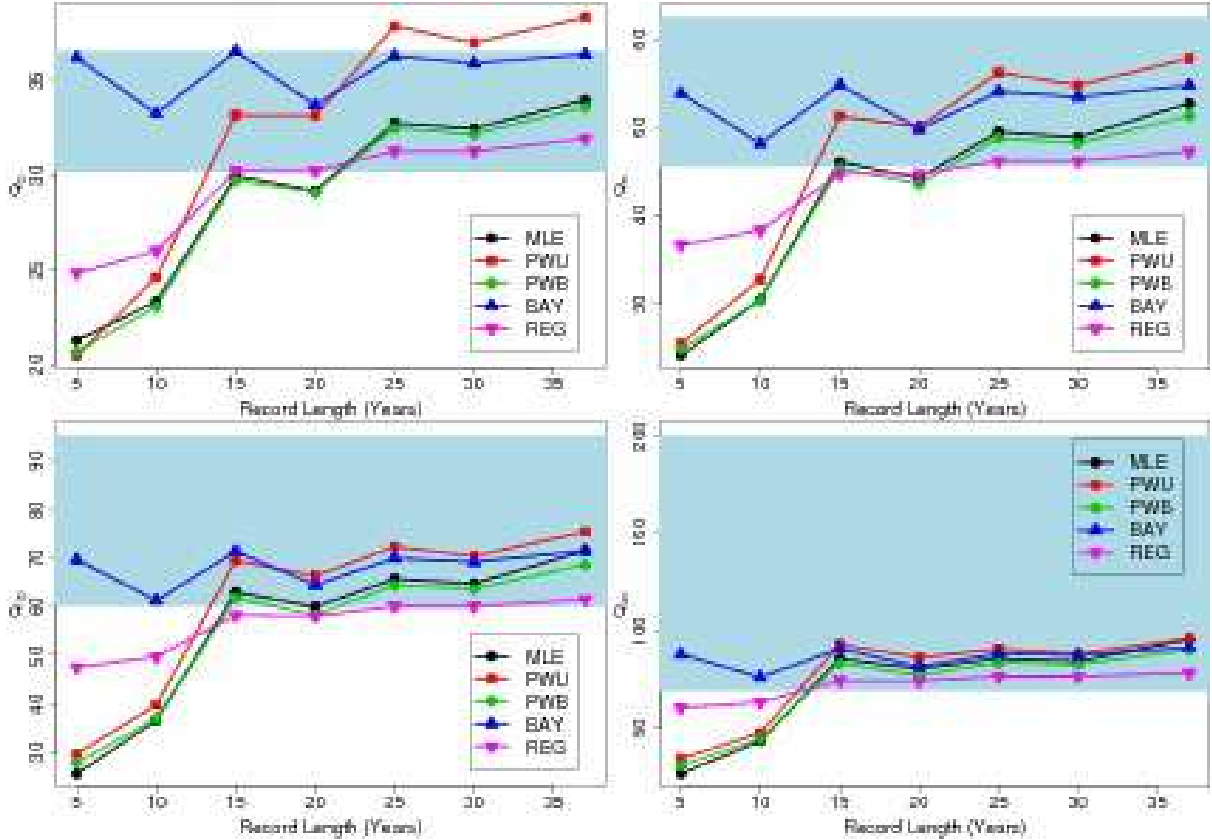}
  \caption{Evolution of $Q_2, Q_5, Q_{10}, Q_{20}$ estimates as the
    size increases for the site U4635010 and 90\% profile likelihood
    confidence interval for the benchmark values - light blue area}
  \label{fig:quantsize}
\end{figure*}
\par

Because of the extreme event observed in 1983 (see
Fig.~\ref{fig:timeseries}, middle panel), systematic underestimation
of benchmark values for local and REG approaches can be noticed. This
result shows that: on one hand, for small samples classical inference like
MLE, PWB and PWU are too responsive if too many ``regular'' events
occurred. On the other hand, for the Index Flood model,
underestimation of quantiles is related to the underestimation of the
scale factor $C^{(S)}$ in Eq.~\eqref{eq:sitdist} because of these
``regular'' events. Only the Bayesian model performs well enough even
with record lengths lower than 15 years.
\par
A monotonic increase of the design flood estimates with the sample
length can be noticed in Fig.~\ref{fig:quantsize}. This behaviour is
easily explained by Fig.~\ref{fig:timeseries}, middle panel. Indeed,
only the last part of the time series shows really extreme events. As
the record length increases, much more extreme events occur leading to
higher estimates. The Bayesian approach is the only one which does not
really present this monotonic behaviour.
\par
Moreover, the Bayesian approach is by far the most robust and accurate
model as, on the whole range of record length, and for all benchmark
values, estimation lies in the 90\% profile likelihood confidence
interval.  This is not true with any other model.  The advantage of
incorporating regional information within a Bayesian framework is
certainly to define a ``restricted space'' where distribution
parameters belong to. Thus, the impact of a very extremal event - or
conversely too many low-level events - should be regarded as an
extreme event related to this ``restricted space''.
\par

The gain of accuracy in the target site from using regional
information is clearly established in section~\ref{sec:ClassApp}
(Fig.~\ref{fig:retlev} and~\ref{fig:quantsize}). The Bayesian approach
seems to be robust even with small samples while being accurate with
larger sample. The poor performance of the REG model is related to a
bad selection of sites within the ``homogeneous'' region being
considered and estimates may be more accurate if ``better'' regions
were considered.  Unfortunately, building up such region is difficult
because of the purely deterministic relation \eqref{eq:sitdist}. As
the Bayesian approach relaxes the REG model, the search for more
homogeneous regions could be ineffective. The goal of the next section
is to measure the potential gain, for the Bayesian model, against
homogeneity property.

\section{Effect of heterogeneity degree on quantile estimation}
\label{sec:CompReg}

As indicated in the previous section, we focus now on the impact of
the level of homogeneousness of the region. For this purpose, we
consider four different regions - denoted $He^+, He, Ho$ and $Ho^+$ -
which correspond to increasingly homogeneous regions according to the
test of \citet{Hosking1997}. The $Ho$ region corresponds to the region
analysed in the previous section and described in
Tab.~\ref{tab:reghomo}. All regions have 14 site except for the most
homogeneous one $Ho^+$ which contains only 8 stations. $He$ and $He^+$
regions are derived form $Ho$. One to five sites are withdrawn and
replaced by other stations to obtain larger heterogeneity measure. The
$Ho^+$ region is a sub-region of $Ho$. Heterogeneity statistics for
these regions are summarised in Tab.~\ref{tab:HetStat}.
\begin{table*}
  \centering
  \caption{Heterogeneity statistics for the four region considered -
    statistics in bracket are obtained with the scale factor taken to
    be the 1-year quantile corresponding to non-exceedance probability 0.5}
  \begin{tabular}{lcccc}
    \hline
    Region & $He^+$  & $He$ & $Ho$ &  $Ho^+$\\
    \hline
    $H_1$ & 7.11 \hfill   (6.83) & 1.35  \hfill   (1.37) & 0.17 \hfill
    ( 0.08) & -0.60 \hfill (-0.67)\\
    $H_2$ & 3.46 \hfill    (3.38) & 1.00 \hfill    (1.03) & 0.41
    \hfill  ( 0.33) & -1.28 \hfill (-1.31)\\
    $H_3$ & 1.40  \hfill   (1.45) & 0.30 \hfill    (0.28) & -0.09
    \hfill  (-0.14) & -1.14 \hfill (-1.18)\\
    \hline
  \end{tabular}
  \label{tab:HetStat}
\end{table*}
\par

To evaluate the influence of homogeneousness level of a region on
quantile estimation, models are assessed using two performance
criteria~: the Normalised Bias ($\mathbf{NBIAS}$) and the Normalised
Root Mean Squared Error ($\mathbf{NRMSE}$). These indices are defined
as follows~:
\begin{eqnarray}
  NBIAS &=& \frac{1}{k}\sum_{i=1}^k \frac{\hat{Q}_i - Q}{Q}\\
  NRMSE &=& \sqrt{\frac{1}{k}\sum_{i=1}^k \left(\frac{\hat{Q}_i-Q}{Q}\right)^2}
\end{eqnarray}
where $k$ is the number of estimates of $Q$ and $\hat{Q}_i$ is the
i-th estimate of the benchmark value $Q$. To compute these two
indices, we fit all models on all trimmed periods of size $m$ years -
$m \in \left\{5, 10, 15, 20, 25, 30\right\}$. Moreover, the overall
performance of each model is evaluated using a rank score.  This
technique was already used to compare different models in
\citet{Shu2004}.
\par

To calculate the rank score, the $p$ models are ordered thanks to
their performance indices - 1 corresponding to the best model and $p$
to the worst. For each model, the scores for the different criteria
are summed to obtain the overall rank score $R_o$ for the model. For
convenience, the overall rank score $R_o$ is standardised in such a
way that in lies within the interval $\left[0,1\right]$:
\begin{equation}
  \label{eq:StdRank}
  R_s = \frac{pq - R_o}{pq - q}
\end{equation}
where $p$ is the number of models being considered, and $q$ the number
of indices. A standardised rank score close to 1 -\textit{resp.} 0 -
is associated to a model with a good - \textit{resp.} poor -
performance.
\par

Three quantiles are of particular interest $Q_5, Q_{10}$ and $Q_{20}$
- \textit{i.e.} associated to probability of non-exceedance 0.9, 0.95
and 0.975 respectively. $NRMSE$, $NBIAS$ and the standardised rank
score for station U4635010 and a record length of 5 years are
illustrated in Tab.~\ref{tab:Efftab}. Notations for different models
in this table consist of one lowercase letter referring to the
Bayesian approach $b$ or Regional Index Flood $r$ and the denomination
of the homogeneity degree of the region. Only the $MLE$ model does not
use these notations as it is completely independent of the homogeneity
level.
\begin{table*}
  \centering
  \caption{Estimation of  $NRMSE$ and $NBIAS$ for station U4635010
    with a record length of 5 years}
  \begin{tabular}{lcccrrrrrc}
    \hline
    \multirow{2}*{Model} & & $NRMSE$ & & & & $NBIAS$ & & &
    \multirow{2}*{Rank Score}\\
    \cline{2-4} \cline{6-8}
     &  $Q_5$ & $Q_{10}$ & $Q_{20}$ && $Q_5$ &
    $Q_{10}$ & $Q_{20}$ && \\
    \hline
    $MLE$ & 0.33 & 0.34 & 0.39 && 0.01 & $-$0.09 & $-$0.18 && 0.26 \\
    $bHe^+$ & 0.16 & 0.13 & 0.18 && 0.09 & $-$0.02 & $-$0.13 && 0.65 \\
    $rHe^+$ & 0.27 & 0.30 & 0.37 && $-$0.12 & $-$0.22 & $-$0.31 && 0.18 \\
    $bHe$ & 0.10 & 0.07 & 0.11 && 0.08 & 0.00 & $-$0.09 && 0.85 \\
    $rHe$ & 0.27 & 0.26 & 0.28 && $-$0.03 & $-$0.10 & $-$0.17 && 0.43 \\
    $bHo$ & 0.14 & 0.09 & 0.08 && 0.12 & 0.05 & $-$0.02 && 0.76 \\
    $rHo$ & 0.27 & 0.26 & 0.27 && 0.01 & $-$0.06 & $-$0.12 && 0.58 \\
    $bHo^+$ & 0.29 & 0.28 & 0.25 && 0.29 & 0.27 & 0.25 && 0.19 \\
    $rHo^+$ & 0.28 & 0.27 & 0.26 && 0.02 & $-$0.01 & $-$0.04 && 0.60 \\
    \hline
  \end{tabular}
  \label{tab:Efftab}
\end{table*}
\par

Results from Tab.~\ref{tab:Efftab} demonstrate that the Bayesian model
performs quite well independently of the region being considered.
However, this model seems to perform even better when applied to a
``acceptably homogeneous'' or ``probably heterogeneous'' region.  For
the $Ho^+$ region, the Bayesian approach performs poorly. This may be
explained by the fact that the prior distribution is too informative
and probably not consistent with the target site sample. This comment
is yet not discrepant with the good overall performance of the REG
model on this region.  Indeed, as the prior distribution is elicited
using equations \eqref{eq:pseudoloc}--\eqref{eq:pseudoshape}, and the
scale factor $C^{(j)}$ is estimated without any use of the target site
sample, this can lead to a misleading prior distribution while the REG
model performs well. The bad estimation of the scale factor is less
important with a more heterogeneous region as the prior information is
less informative, thus the Bayesian model performance is not highly
impacted.
\par

On the other side, the overall rank score of the REG model increases
with the homogeneity degree of the region. Yet, the overall rank score
for the REG model never exceeds the value of 0.6 - reached for the
$Ho^+$ region.  This value remains much lower than the best rank score
for the Bayesian model - \textit{i.e.} 0.85. These results corroborate
the superiority of the Bayesian approach.
\par

From Tab.~\ref{tab:Efftab}, two conclusions can be established. On one
hand, for small samples, the Bayesian approach is the most competitive
model. On the other hand, results seem to indicate that there is no
need to keep increasing the homogeneousness of the region as it
increases the risk of being too confident in the ``homogeneous
region'' without increasing significantly the efficiency of the model.
\par

These results are in line with similar results obtained for stations
U4505010 and V3015010, except for the bad behaviour of the Bayesian
model on the $Ho^+$ region. Indeed, for the other stations, the
Bayesian model remains more efficient than the REG model within the
$Ho^+$ region.  However, its overall rank score remains stable through
out the different region - $He, Ho$ and $Ho^+$. The ``risk'' to deal
with too much homogeneous region - as $Ho^+$ - is also corroborated as
the overall rank score for the Index Flood model for station U4505010
decreases dramatically until 0.06. Thus, the Index Flood for the
$Ho^+$ region performs quite well for stations U4635010 and V3015010,
while very surprisingly badly with U4505010.
\par

\begin{figure*}
  \centering
  \includegraphics[width=1\textwidth]{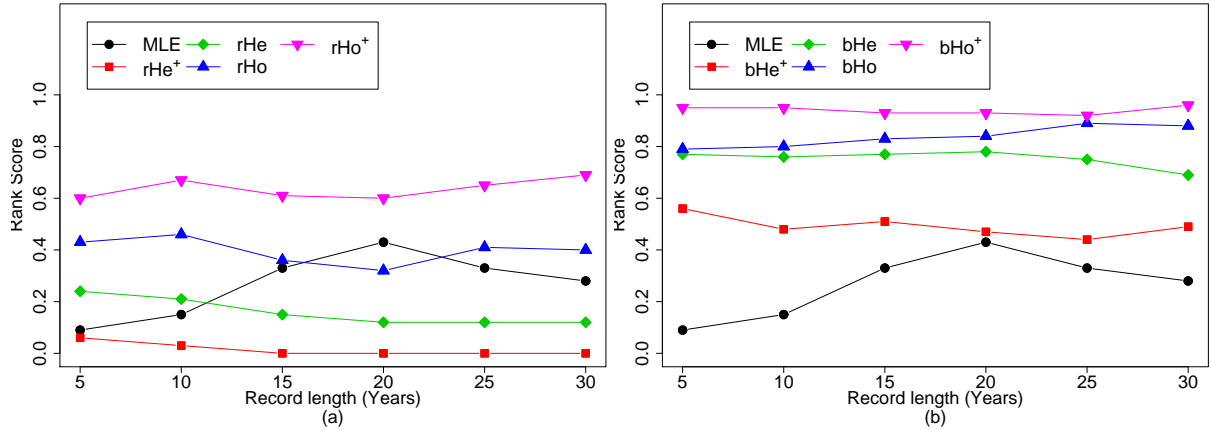}
  \caption{Score evolution as a function of record length for Station
    V3015010.  (a) REG scores, (b) BAY scores}
  \label{fig:ScoreVsSize}
\end{figure*}
In Fig.~\ref{fig:ScoreVsSize}, the evolution of the overall rank score
as a function of the record length is illustrated for station
V3015010. The left panel corresponds to the REG part, while the right
one stands for the Bayesian approach. In both panel, the $MLE$ score
is also presented. Fig.~\ref{fig:ScoreVsSize} indicates that the
evolution of the overall rank score is more stable for regional models
- that is REG and Bayesian models - than for the $MLE$.  Furthermore,
the benefit of increasing the homogeneity degree of the region is more
relevant for the REG model than for the Bayesian model.  Nevertheless,
the worst Bayesian rank score is always quite close to the best REG
rank score. This seems to indicate the superiority of the Bayesian
approach. This last point is corroborated with the results
corresponding to stations U4505010 and U4635010 except for the $bHo^+$
model for station U4635010 because of the bad estimation of the scale
factor $C^{(j)}$ - as denoted earlier. The effect of bad estimation of
the target site Index Flood on prior and thereby on posterior
distributions is depicted in Fig.~\ref{fig:badEstIF}.
\begin{figure*}
  \centering
  \includegraphics[angle=-90,width=1\textwidth]{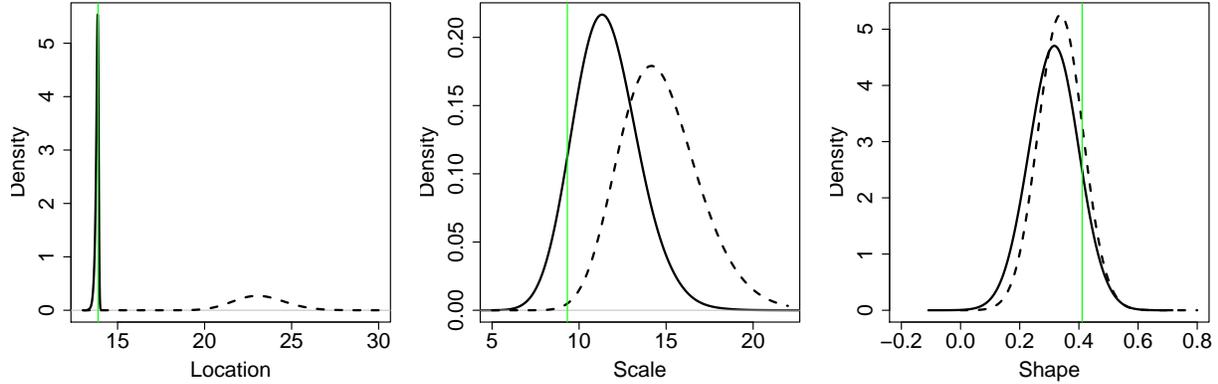}
  \caption{Effect of bad estimation of target site Index Flood on
    marginal prior and posterior densities. Site U4635010 with 10
    years record length.} 
  \label{fig:badEstIF}
\end{figure*}
From Fig.~\ref{fig:badEstIF}, it is overwhelming that the prior model
is not appropriate - particularly for the location parameter. Prior
for the shape parameter is not too false as it does not depend on the
target site Index Flood estimate.
\par
As the record length increases, the $MLE$ model becomes more and more
efficient. In particular, for record lengths greater than 15 years, it
is more effective than $rHe^+, rHe$ and $rHo$ models. On one hand, for
record lengths smaller than 15 years, $MLE$ is always less efficient
than Bayesian approaches and even significantly for $bHe, bHo$ and
$bHo^+$ models. This is quite logical as Bayesian estimation can be
looked at as a restrictive maximum likelihood estimator - restriction
being defined by the prior distribution. So, under the hypothesis that
the prior distribution is well-defined, the ``restrictive estimator''
is unbiased and has a smaller variance. On the other hand, for record
lengths greater than 15 years, $MLE, bHe$ and $bHo$ seems to be
equivalent.

\section{Conclusion}
\label{sec:ccl}

A framework to perform a regional Bayesian frequency analysis for
partially gauged stations is presented. The proposed model has the
advantage of being less restrictive than the most widely used regional
model, that is the Index Flood. Several case studies from French sites
were analysed to illustrate the superiority of the Bayesian approach
in comparison to the traditional Index Flood and to local approaches.
The influence of the homogeneousness level of the pooling group on
quantile estimates was also considered. Results demonstrate that
working with quite large and homogeneous regions rather than small and
strongly homogeneous regions is more efficient. Further work can focus
on the regional estimation of other characteristics of the flood
hydrograph. For instance, a regional Bayesian model can focus on Flood
Duration Frequency.

All statistical analysis was carried out in the \citet{Rsoft}
framework. For this purpose, two packages were contributed to this
software under the framework of the present research work. These two
packages integrate the tools that were developed to carry out the
modelling effort presented in this paper. The first one \textbf{POT}
performs statistical inference on Peaks Over Thresholds, while the
second one, \textbf{RFA}, contains several tools to carry out a
Regional Frequency Analysis. These two packages are available, free of
charge, at the web site \url{http://www.R-project.org}, section CRAN,
Packages.

\section*{Acknowledgements}
\label{sec:ackn}

The authors wish to thank the DIREN Rhône-Alpes for providing data. 
The authors are also very grateful to the two referees for
their constructive remarks which improve the document.

\appendix
\section*{Appendix A. Properties of the Index Flood on GP parameters} 
\label{sec:proof}

We provide in this appendix the proof for the following theorem:
\begin{thm}
  Let X be a random variable GP distributed. So $X$ has the
  Cumulative Distribution Function defined by:
  \begin{equation*}
  F(x) = 1 - \left( 1 +\frac{\xi\left(x - \mu\right)}{\sigma}\right)
  ^{-1/\xi}
\end{equation*}
Let $Y= C X$ where $C \in \mathbb{R}^+_*$. Then, $Y$ is also GP
distributed with parameters $\left(C\mu, C\sigma, \xi\right)$.
\end{thm}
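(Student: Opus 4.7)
The plan is to work directly from the definition of the CDF, since the claim is really a statement about how a positive scaling acts on the arguments of the generalized Pareto distribution function. The proof will reduce to a one-line substitution followed by an algebraic rearrangement, so the only thing to be careful about is to keep track of the support constraint $1 + \xi(x-\mu)/\sigma > 0$ and to verify that it transforms correctly under $Y = CX$.

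First I would compute the CDF of $Y$. Since $C > 0$, the event $\{Y \leq y\}$ coincides with $\{X \leq y/C\}$, and so
\begin{equation*}
F_Y(y) = P(CX \leq y) = P(X \leq y/C) = F_X(y/C).
\end{equation*}
Plugging this into the stated form of $F$ gives
\begin{equation*}
F_Y(y) = 1 - \left(1 + \frac{\xi\left(y/C - \mu\right)}{\sigma}\right)^{-1/\xi}.
\end{equation*}

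Next I would perform the key algebraic step: factor $1/C$ out of the argument of $\xi(\cdot)/\sigma$ to rewrite
\begin{equation*}
\frac{\xi\left(y/C - \mu\right)}{\sigma} = \frac{\xi\left(y - C\mu\right)}{C\sigma},
\end{equation*}
so that
\begin{equation*}
F_Y(y) = 1 - \left(1 + \frac{\xi\left(y - C\mu\right)}{C\sigma}\right)^{-1/\xi}.
\end{equation*}
This is exactly the CDF of a GP distribution with location $C\mu$, scale $C\sigma$, and shape $\xi$.

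Finally I would check the boundary conditions, which is the only spot where a small slip is possible. The support condition $1 + \xi(x-\mu)/\sigma > 0$ becomes, under $x = y/C$ with $C>0$, the equivalent condition $1 + \xi(y - C\mu)/(C\sigma) > 0$, and the new scale parameter $C\sigma$ is positive because both $C$ and $\sigma$ are. The limiting exponential case $\xi = 0$ follows by continuity in exactly the same way (or by the same one-line substitution applied to the exponential CDF). So no obstacle is really anticipated — the substitution plus factoring $C$ from the numerator gives the result directly.
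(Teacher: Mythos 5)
Your proof is correct and follows essentially the same route as the paper's: compute $F_Y(y)=F_X(y/C)$ and factor $C$ out of the argument to read off the parameters $(C\mu, C\sigma, \xi)$. Your additional checks of the support condition and the $\xi=0$ limiting case are fine but do not change the argument.
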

\begin{proof}
  Let $X$ be a \textit{r.v.} GP distributed with parameters
  $\left(\mu, \sigma, \xi\right)$ and $Y= C X$ where $C \in
  \mathbb{R}^+_*$.  Then:
\begin{eqnarray*}
  \Pr\left[ Y \leq y \right] &=& \Pr\left[ X \leq \frac{y}{C} \right]\\
  &=& 1 - \left( 1 +
    \frac{\xi\left(\frac{y}{C} - \mu\right)}{\sigma} \right)^{-1/\xi}\\
  &=& 1 - \left( 1 +
    \frac{\xi\left(y - \mu C\right)}{\sigma C} \right)^{-1/\xi}
\end{eqnarray*}
So, $Y$ is also GP distributed with parameters $\left(\mu C, \sigma
  C, \xi \right)$.  The proof for the GEV case can be established in the
same way.
\end{proof}

\bibliography{biblio_ribatet}
\bibliographystyle{plainnat}
\end{document}